\newcommand{\e}[1]{{\mathbb E}\left[ #1 \right]}
\let\today\relax
\def\ps@pprintTitle{%
	\let\@oddhead\@empty
	\let\@evenhead\@empty
	\def\@oddfoot{\footnotesize\itshape
		{Submitted to Scandinavian Actuarial Journal} \hfill\today}%
	\let\@evenfoot\@oddfoot
}
\newtheorem{theorem}{Theorem}
\begin{document}

\begin{frontmatter}

\title{A bridge between Local GAAP and Solvency II frameworks to quantify Capital Requirement for demographic risk}

\author[label1]{Gian Paolo Clemente}
\ead{gianpaolo.clemente@unicatt.it}
\address[label1]{Università Cattolica del Sacro Cuore, Milano}

\author[label2]{Francesco Della Corte}
\ead{francesco.dellacorte@uniroma1.it}
\address[label2]{Università La Sapienza, Roma}

\author[label1]{Nino Savelli}
\ead{nino.savelli@unicatt.it}

\begin{abstract}
The paper provides a stochastic model useful for assessing the capital requirement for demographic risk.  The model extends to the market consistent context classical methodologies developed in a local accounting framework. In particular we provide a unique formulation for different non-participating life insurance contracts and we prove analytically that the valuation of demographic profit can be significantly affected by the financial conditions in the market. A case study has been also developed considering a portfolio of life insurance contracts. Results prove the effectiveness of the model in highlighting main drivers of capital requirement evaluation, also compared to local GAAP framework.
\end{abstract}

\begin{keyword}
\texttt{life insurance, Solvency Capital Requirement, Solvency II, Local GAAP, risk theory} 
\end{keyword}

\end{frontmatter}


\section{Introduction}\label{sec:intro}
Two key innovations, brought by the directive Solvency II in insurance, are the introduction of the market consistent framework for the valuation of assets and liabilities and the definition of risk-based principles for the assessment of the Capital Requirement. In this context, the quantification of losses on an annual time horizon at a given confidence level is a crucial element in determining the requirement. Each company must decide whether to adopt the standard approach or to use its own (partial or full) internal model, which has to be approved by the local supervisory authority.  Furthermore, several sources of risk are involved in the valuation process and also measuring the dependence between them is also a crucial point.\\
In this framework, we focus on demographic profit and we provide a stochastic model to quantify the capital requirement for both mortality and longevity risk. In the literature, several papers dealt with this topic. In particular, Pitacco and Olivieri (see \cite{olivieri2008assessing}) design a framework for a market-consistent analysis of the life-annuity portfolio. The authors link the traditional approach to a risk-neutral valuation assessing the cost of capital for longevity risk and measuring the amount of target capital for mortality and longevity risk. Hári et al. (see \cite{hari2008longevity}) analyse the relevance of longevity risk for the solvency position of annuity portfolios distinguishing between micro and macro-longevity risks.\\
Stevens et al.  (see \cite{stevens2010calculating}) quantify the value of annuity liabilities and of the related longevity risk capital requirement by applying the classical Lee-Carter model to estimate the uncertainty of future survival probabilities. Bauer et al. (see \cite{bauer2015least}) propose an approach for the calculation of the required risk capital based on least-squares regression and Monte Carlo simulations. Savelli and Clemente (see \cite{savelli2013risk}) provide an approach based on risk theory to evaluate the capital requirement for mortality and longevity risk in a local accounting framework. Boonen (see \cite{boonen2017solvency}) examines the consequences for a life insurance company of a calibration of longevity and financial risks by using the expected shortfall instead of value-at-risk. Furthermore, the notion of market-consistent valuation of insurance liabilities has been investigated recently by several authors (see \cite{pelsser2014time}, \cite{dhaene2017fair}) and the analyzes have been also extended to a dynamic multi-period setting (see \cite{barigou2019fair}).
\added{Dahl et al. (see \cite{dahl2004stochastic}) model mortality intensity as a stochastic process and  quantify mortality risk by capturing the importance of time dependency and uncertainty.}
This paper contributes to the existing literature proposing a different approach for modelling the capital requirement for mortality and longevity risk. We adapt the well-known gain and loss decomposition (see \cite{bacinello1986portfolio}) to a Solvency II framework, providing a closed formula for the random variable demographic profit and loss of a life insurance company. We show that this general formula holds for different \added{non-participating} life insurance contracts and we analytically split it in several elements in order to emphasize \added{the} main key drivers. \added{To the best of our knowledge a closed formula is not already available in the literature in a market-consistent framework. Additionally, it is noteworthy that, although our aim is not to forecast future mortality rates, the proposed approach is also consistent with the application of mortality models.}\\
In particular, we assume a portfolio of \added{non-participating} life insurance policies composed by several cohorts of contracts, where in each cohort all the policyholders have the same characteristics (e.g., age, gender etc.) and the only element of distinction is represented by the \added{sums insured}.\\
Through this model, we prove the analytical decomposition of the expected demographic profit/loss highlighting main drivers. Additionally, we assess the distribution via Monte Carlo simulations and we identify a Solvency Capital Requirement compliant with Solvency II. Furthermore, we assure that the model accurately reflects different sources of profit as requested by the requirements set by the Solvency II Directive (see art. 123, \cite{delacts}). In addition, a numerical section is presented to apply the proposed approach on alternative life insurance policies; \added{here, we focus on the risk identified in Shen et al. (see \cite{shen2018lifetime}) as idiosyncratic risk}.\\
The remainder of this paper is organized as follows. Section 2 exploits the traditional Homans formula, based on local Generally Accepted Accounting Principles (GAAP), adapting it to a market consistent context. Section 3 focuses on the random variable demographic profit, also providing a decomposition of this random variable useful for profit and loss composition. \added{Section 4 presents the algebra underlying the model and the proof of a recursive formula that supports the results.} Section 5 presents some simplified examples to emphasize the main key drivers. In Section 6, we assess our proposal by developing detailed case studies based on a real portfolio composed by different \added{non-participating} life insurance contracts. In particular, main results confirm the effectiveness of the model in measuring the capital requirement and assessing the different components that affect the demographic results. In Section 7 we propose the conclusions of our paper.

\section{Technical Profit and gain/loss decomposition in a Solvency II framework}

\noindent In this Section, we provide a formula for the random variable (r.v.) technical profit of a life insurance company. In particular, we define it in a market consistent framework according to the definition of technical liabilities given by the Solvency II regulation. Furthermore, we prove that a gain and loss decomposition is possible in order to emphasize main profit components. The decomposition is developed in a similar fashion of the well-known decomposition provided by Homans (see \cite{olivieri2005valutazione}) in a local accounting framework.\\
We start by briefly recalling the r.v. technical profit, that we denote\footnote{from now on, each random variable will be indicated with the tilde} with $\tilde{Y}_{t+1}$, in a local accounting context (see \cite{savelli1993modello})
\begin{equation}
\tilde{Y}_{t+1} = [\tilde{VB}_t + \tilde{B}_{t+1} -\tilde{E}_{t+1} -\tilde{S}_{t+1} ](1+\tilde{j}_{t+1})-[\tilde{X}_{t+1}+\tilde{VB}_{t+1}]
\label{eq:tplocalgaap}
\end{equation}
The random variable technical profit $\tilde{Y}_{t+1}$ is defined as the difference of two terms. The first one is the sum of the complete technical provisions $\tilde{VB}_{t}$\footnote{As "complete", $\tilde{VB}_{t}$ indicates the sum of the pure mathematical reserve (expected present value of the benefits net of the expected present value of the premiums) and the expense reserve. Both are calculated on locked and prudential (demographic and financial) bases: hence both demographic and financial base used is the same as applied in the pricing phase.}, and the gross earned premiums $\tilde{B}_{t+1}$ net of expenses $\tilde{E}_{t+1}$ and surrenders $\tilde{S}_{t+1}$ \added{accumulated} at the \deleted{effective}\added{actual} financial return rate $\tilde{j}_{t+1}$. The second term in formula (\ref*{eq:tplocalgaap}) is the sum of total claim costs $\tilde{X}_{t+1}$ and the complete technical provisions stored at the end of the year $\tilde{VB}_{t+1}$. It is noteworthy that we consider with the r.v. $\tilde{X}_{t+1}$ different \added{non-participating} life insurance policies (i.e., as specified later, this r.v. could be equal to zero or to the total lump sums paid in case of death or survival according to the kind of benefit covered by the insurance policy).\\
With the introduction of the Solvency II framework (see Directive 138/2009/EC), previous formula has to be adapted to consider the market consistent valuation of assets and liabilities. Referring only to non-hedgeable liabilities, these are calculated as the sum of Best Estimate and Risk Margin (see Art. 77 of Directive 2009/138/EC). Hence, we can rewrite formula (\ref*{eq:tplocalgaap}) as:
\begin{equation}
\tilde{Y}_{t+1}^{MCV} = [\tilde{BE}_t + \tilde{B}_{t+1} -\tilde{E}_{t+1} -\tilde{S}_{t+1} ](1+\tilde{j}_{t+1})-[\tilde{X}_{t+1}+\tilde{BE}_{t+1}]
\label{eq:tpmcv}
\end{equation}
It should be pointed out that the Risk Margin is not considered \added{since the purpose of this model is to identify a Solvency Capital Requirement consistent with the legislation and, therefore, with the Delegated Acts}. Indeed, Delegated Regulation (see \cite{delacts}) assumes that worst case scenario does not change the amount of Risk Margin included in the technical provisions. \added{Therefore, the inclusion of this \added{aforementioned} component in formula (\ref{eq:tpmcv}) when the formula is used for \added{risk} capital purposes is not in line with the constraints introduced by the regulation and could inappropiately affect the comparison with the standard formula \deleted{capital} requirement.}
Similarly to the well-known results regarding the decomposition of the classical Homans formula, we prove in the \ref*{sec:app1} the five components of the technical profit formula \added{in a stochastic context}. Our main interest is to focus on the first and the fourth components, respectively the demographic and financial profit and loss.
Using rate-based expressions instead of amounts (therefore switching from uppercase letters to lowercase ones), the demographic profit is defined as:
\textbf{\begin{equation}
\begin{aligned}
_1\tilde{y}_{t+1}^{MCV} = \ & \{[\tilde{be}_t^{\tilde{Rf(t)}, q(t)} + b_{t+1} \cdot (1-\alpha^*-\beta^*)-\gamma^*]\cdot(\tilde{w}_t -\tilde{s}_{t+1})\cdot(1+j^*)+ \\ &-(\tilde{x}_{t+1}+\tilde{w}_{t+1}\cdot\tilde{be}_{t+1}^{\tilde{Rf(t+1)}, q(t+1)})
\} \cdot 
\end{aligned}
\label{eq:demographicprofit}
\end{equation}}
\noindent where $\tilde{be}_t^{Rf(t), q(t)}$ is the rate of best estimate based on the risk-free curve available at time t and realistic demographic assumptions $q(t)$ at time $t$, $b_{t+1}$ is the premium rate, $\alpha^*$, $\beta^*$ and $\gamma^*$ are expense loading coefficients for acquisition, management and collection costs, respectively. The term $(\tilde{w}_t -\tilde{s}_{t+1})$ is the difference between the amount of the \added{sums insured}  $\tilde{w}_t$ at the end of time $t$ and surrenders $\tilde{s}_{t+1}$; $j^*$ is the first order financial rate (i.e., the technical rate assumed in the premium assessment). Similarly the reserve at the end of the year is computed as the product of \added{sums insured} $\tilde{w}_{t+1}$ and the best estimate rate $\tilde{be}_t^{Rf(t+1), q_{t+1}}$  based on financial $Rf(t+1)$ and demographic $q(t+1)$ assumptions in force at time $t+1$. \added{It is noteworthy that formula (\ref{eq:demographicprofit}) also allows to analyse the effects of a possible change in the demographic technical base in $t+1$. Additionally, it is easy to show that exploiting the term $\tilde{x}_{t+1}$ it is possible to find a capital requirement framework in line with the proposal made in the Quantitative Impact Study \added{n.2}, carried out in may 2006 for the preparation of final SCR standard formula, where a closed notation was provided for both mortality and longevity risk.}\\
\noindent A peculiar mention must be given also to \added{another} profit component, the so-called financial profit, that is necessary to explain some key aspects of the model. It is defined as follow:
\textbf{\begin{equation}
	\begin{aligned}
	_2\tilde{y}_{t+1}^{MCV} = \ &(\tilde{j}_{t+1}-j^*)\cdot[\tilde{be}_t^{\tilde{Rf(t)}, q(t)}\cdot\tilde{w}_t+b_{t+1}(1-\alpha^*-\beta^*)\cdot(\tilde{w}_t - \tilde{s}_{t+1})+\\
	&-\gamma^*\cdot\tilde{w}_t-g^*_t\cdot\tilde{be}_t^{\tilde{Rf(t)},q(t)}\cdot\tilde{s}_{t+1}]
	\end{aligned}
	\label{eq:financialprofit}
	\end{equation}}
where $\tilde{j}_{t+1}$ is the r.v. that describes the rate of return of invested assets and $g^*_t$ is a specific penalization coefficient applied in case of surrender (see \ref*{sec:app1} for details on the formula).
As expected, the sign of this component depends on the relation between the effective investment rate and the technical rate guaranteed to the policyholders. 
\added{For the sake of brevity, the other profit components are reported in Appendix A, in particular that for expenses, for lapses and a residual margin. As proved, the sum of these five components gives back the whole technical profit (see formula (\ref*{eq:tpmcv}))}

\section{The demographic profit and its factorisation}
\noindent Given the gain and loss decomposition provided in previous Section, we focus now only on the demographic component (see formula (\ref{eq:demographicprofit})). Indeed, modelling this random variable, we are able to assess the capital requirement for mortality or longevity risk. 
After some simple manipulations, it is possible to rewrite formula (\ref{eq:demographicprofit}) as follows:
\begin{equation}
\begin{aligned}
_1\tilde{y}_{t+1}^{MCV} = \ &D_{t+1}^b \cdot[q_{x+t}^*\cdot(\tilde{w}_t-\tilde{s}_{t+1})-\tilde{z}_{t+1}]+ \\ 
&+(\tilde{be}_t^{\tilde{Rf}_t, q}-v_t^b)\cdot(\tilde{w}_t -\tilde{s}_{t+1})\cdot(1+j^*)-\tilde{w}_{t+1}\cdot(\tilde{be}_{t+1}^{\tilde{Rf}_{t+1},q}-v_{t+1}^b)
\end{aligned}
\label{eq:dp2}
\end{equation}
where $D_{t+1}^b$ is the complete sum-at-risk rate at time $t+1$ and $q^*$ is the first-order annual death probability, $v_t^b$ is the complete reserve rate based on first order basis. \added{The sum-at-risk are here defined \lq\lq complete\rq\rq since complete reserve rate is considered (i.e. including expenses reserves); they can be negative, as in pure endowment and annuities cases.} This formula is based on the following relation that describes the evolution of the \added{sums insured} over time:
\begin{equation}
\tilde{w}_{t+1}=\tilde{w}_t-\tilde{s}_{t+1}-\tilde{z}_{t+1}
\label{eq:recursivesums}
\end{equation}
where $\tilde{z}_{t+1}$ is the amount of \added{sums insured}  eliminated in case of death.\footnote{In this context $\tilde{x}_{t+1}=\tilde{z}_t+1$ for term Insurance and Endowment Policies; $\tilde{x}_{t+1}=0$ for Pure Endowment policies} 
In formula (\ref{eq:dp2}), we assume that the best estimates in $t$ and in $t+1$ are calculated on the same realistic demographic assumptions. Therefore, we simplify the notation using $q$ instead of  $q(t)$ and $q(t+1)$. Since the assessment is carried out on a one-year time span, it makes sense to consider that within such a short period of time the insurer does not change its demographic expectations. However, it is possible to evaluate the additional effect on the demographic profit of a one-year change of the second-order demographic assumptions. \\
It is worth pointing out that the first term in formula (\ref{eq:dp2}), here denoted as $_1\tilde{y}_{t+1}^{LG}$:
\begin{equation}
_1\tilde{y}_{t+1}^{LG} = \ D_{t+1}^b \cdot[q_{x+t}^*\cdot(\tilde{w}_t-\tilde{s}_{t+1})-\tilde{z}_{t+1}]
\label{localgaapeq}
\end{equation}
represents the demographic profit in a local accounting framework (see \cite{savelli2013risk}).\\
Additionally, it is interesting to separately highlight the effects of risk-free rates volatility and demographic trends. In this regard, we provide the following decomposition: 
\begin{equation}
_1\tilde{y}_{t+1}^{MCV} = _1\tilde{y}_{t+1}^{LG} +_1\tilde{y}_{t+1}^{MCV Rf-j*} +_1\tilde{y}_{t+1}^{MCV q-q*} 
\label{demprof3comp}
\end{equation}
where  $_1\tilde{y}_{t+1}^{MCV Rf-j*}$  is the demographic profit given by the differences between the first order financial rate $j^*$  and the risk-free rate curve. The term  $_1\tilde{y}_{t+1}^{MCV q-q*}$ measures instead the demographic profit originated by the differences between first-order and second-order death probabilities ($q^*$ and $q$, respectively).
To provide this decomposition and to define the second and the third term in formula (\ref{demprof3comp}), we denote with $epv_t^{j^*,q}$ the expected present value of future cash-flows evaluated at time $t$  and computed by using first-order discount rates $j^*$ and second-order death probabilities $q$. In other words, this amount differs from $\tilde{be}_t^{Rf(t),q}$ only in terms of discounting factors.\\
It is easy to show that  $_1\tilde{y}_{t+1}^{MCV Rf-j*}$  is defined as:
\begin{equation}
\begin{aligned}
_1\tilde{y}_{t+1}^{MCV Rf-j*}=&(\tilde{w}_t-\tilde{s}_t)\cdot[(\tilde{be}_t^{Rf(t),q}-epv_t^{j^*,q})(1+j^*)-(\tilde{be}_{t+1}^{Rf(t+1),q}-epv_{t+1}^{j^*,q})]+\\
&(\tilde{be}_{t+1}^{Rf(t+1),q}-epv_{t+1}^{j^*,q})\cdot\tilde{z}_{t+1}
\end{aligned}
\label{2ndComponent}
\end{equation}
and the third component in formula (\ref{demprof3comp}) is:
\begin{equation}
\begin{aligned}
_1\tilde{y}_{t+1}^{MCV q-q*}=&(\tilde{w}_t-\tilde{s}_t)\cdot[(epv_t^{j^*,q}-v_t^b)(1+j^*)-(epv_{t+1}^{j^*,q}-v_{t+1}^b)]+\\
&(epv_{t+1}^{j^*,q}-v_{t+1}^b)\cdot\tilde{z}_{t+1}
\end{aligned}
\label{3rdComponent}
\end{equation}
It is interesting to note that $_1\tilde{y}_{t+1}^{MCV Rf-j*}$  is strictly related to the difference between the best estimate and the expected present value of future cash-flows $epv_t^{j^*,q}$. In particular, we have a positive value if this difference, \added{accumulated} at the technical rate $j^*$, is greater than the analogous difference computed at time $t+1$. 
It's essential to note that the difference $(\tilde{be}_t^{Rf(t),q}-epv_t^{j^*,q})$ only depends on the difference between the risk-free rates at time t and the technical rate $j^*$. For instance, a sudden and substantial change in value of the risk-free curve entails a jump in $t+1$ greater than the jump in $t$.\\
We also recall that we are focusing here only on the demographic profit. Hence, the whole effect of the risk-free rate curve on the technical profit can be assessed by considering also the financial profit. However, this point goes beyond the scope of the paper that is to quantify the capital requirement for mortality or longevity risk.\\
The term $_1\tilde{y}_{t+1}^{MCV q-q*}$  depends instead on the difference between the expected present value  $epv_t^{j^*,q}$   and the technical provisions defined according to local accounting rules. Both values are computed with the same discounting rate $j^*$, but using a different life table (second-order and first-order, respectively). \added{It \added{must be pointed out} that the sign of $_1\tilde{y}_{t+1}^{MCV q-q*}$ is mainly related to the comparison between the term $(epv_t^{j^*,q}-v_t^b)$, accumulated at the technical rate $j^*$, and the same difference evaluated at time $t+1$. We have indeed that the last term in formula (\ref{3rdComponent}) has usually a very low weight and hence a low effect on the sign of $_1\tilde{y}_{t+1}^{MCV q-q*}$.}\\
Now, we focus on the expected demographic profit (i.e. $_1\tilde{y}_{t+1}^{MCV}$) and we prove (see next Section) that when $t>1$, we have: 
\begin{equation}
\e{_1\tilde{y}_{t+1}^{MCV}}=\e{_1\tilde{y}_{t+1}^{MCV Rf-j*}}
\label{EV1}
\end{equation}
since
\begin{equation}
\e{_1\tilde{y}_{t+1}^{LG}}=-\e{_1\tilde{y}_{t+1}^{MCV q-q*}}
\label{EV2}
\end{equation}
This result shows that in case the second-order life table chosen by the insurance \added{company} is stable in the period $(t,t+1]$, the sign of the average demographic profit depends mainly on the differences between the risk-free curve and first-order financial rate. In particular, by formula (\ref{2ndComponent}), we can also rewrite the mean as (when $t\geq1$):
\begin{equation}
\begin{aligned}
&\e{_1\tilde{y}_{t+1}^{MCV}}=\e{_1\tilde{y}_{t+1}^{MCV Rf-j*}}\\
&\e{_1\tilde{y}_{t+1}^{MCV}}=\e{\tilde{w}_t}\cdot(1+j^*)\cdot\Biggl[b_t+\e{\tilde{be}_t^{Rf(t),q}}-\e{\tilde{be}_{t+1}^{Rf(t),q}} \cdot _{/1}E_{x+t}^{j^*,q}-\e{\tilde{x}_{t+1}}\Biggr]
\end{aligned}
\label{EVtgreater1}
\end{equation}
\added{On the bases of} formula (\ref{EVtgreater1}) it easy to show that (see Section 4):
\begin{equation}
	\e{_1\tilde{y}_{t+1}^{MCV}}=0
	\label{ev0}
\end{equation}
if, at time $t$, the one-year spot rate is equal to $j^*$.\\
It is worth pointing out that, at the inception of the contract (i.e. when $t=0$), by formula (\ref{EVtgreater1}) the expected value of the demographic profit also depends on the differences between first-order and second-order life tables.

\section{Model algebra and underlying recursive formula}
\noindent We report in this section main results and related proofs needed to support the presented stochastic model. 
\begin{theorem}
	Considering a without-profit life insurance contract, we have that $\e{_1\tilde{y}_{t+1}^{LG}}=-\e{_1\tilde{y}_{t+1}^{MCV q-q*}}$ for $t>1$.	
	\label{teorema_1}
\end{theorem}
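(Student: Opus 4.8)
\noindent The plan is to reduce the statement to a purely deterministic relation between the first-order reserve and the expected present value computed on the realistic table, which then follows from two Fackler-type recursions.

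First I would take expectations in (\ref{localgaapeq}) and (\ref{3rdComponent}). All of $D_{t+1}^b$, $q_{x+t}^*$, $v_t^b$, $v_{t+1}^b$, $epv_t^{j^*,q}$ and $epv_{t+1}^{j^*,q}$ are deterministic (they rest on first-order bases and on the single realistic table $q$, kept fixed over $(t,t+1]$), so the only randomness in (\ref{localgaapeq})--(\ref{3rdComponent}) enters through $\tilde{w}_t$, $\tilde{s}_{t+1}$ and $\tilde{z}_{t+1}$. Since $\tilde{z}_{t+1}$ is the amount of sums insured eliminated by death during $(t,t+1]$ among the contracts still in force after surrenders, its realistic conditional expectation, given the history up to time $t$, is $q_{x+t}\,(\tilde{w}_t-\tilde{s}_{t+1})$. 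Writing $p_{x+t}:=1-q_{x+t}$, taking expectations and using (\ref{eq:recursivesums}) then gives
\begin{equation}
\e{_1\tilde{y}_{t+1}^{LG}}=D_{t+1}^b\,(q_{x+t}^*-q_{x+t})\cdot\e{\tilde{w}_t-\tilde{s}_{t+1}}
\end{equation}
and, after substituting $\e{\tilde{z}_{t+1}}=q_{x+t}\,\e{\tilde{w}_t-\tilde{s}_{t+1}}$ into (\ref{3rdComponent}) and collecting the two terms that carry the factor $(epv_{t+1}^{j^*,q}-v_{t+1}^b)$,
\begin{equation}
\e{_1\tilde{y}_{t+1}^{MCV q-q*}}=\e{\tilde{w}_t-\tilde{s}_{t+1}}\cdot\Bigl[(epv_t^{j^*,q}-v_t^b)(1+j^*)-(epv_{t+1}^{j^*,q}-v_{t+1}^b)\,p_{x+t}\Bigr].
\end{equation}
Hence it suffices to establish the deterministic identity $(epv_t^{j^*,q}-v_t^b)(1+j^*)-(epv_{t+1}^{j^*,q}-v_{t+1}^b)\,p_{x+t}=D_{t+1}^b\,(q_{x+t}-q_{x+t}^*)$.

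To get this identity I would write, side by side, the recursion obeyed by the complete first-order reserve and the one obeyed by the expected present value on the realistic table. Denoting by $\pi_t^b:=b_{t+1}(1-\alpha^*-\beta^*)-\gamma^*$ the premium net of first-order loadings and by $\delta_{t+1}$ the death-benefit rate --- so that $\delta_{t+1}=D_{t+1}^b+v_{t+1}^b$ by the very definition of the complete sum at risk --- these read
\begin{equation}
\begin{aligned}
&(v_t^b+\pi_t^b)(1+j^*)=q_{x+t}^*\,\delta_{t+1}+(1-q_{x+t}^*)\,v_{t+1}^b,\\
&(epv_t^{j^*,q}+\pi_t^b)(1+j^*)=q_{x+t}\,\delta_{t+1}+(1-q_{x+t})\,epv_{t+1}^{j^*,q}.
\end{aligned}
\end{equation}
Because the contract is without profit, premiums and benefits are contractually fixed, so both recursions carry the same $\pi_t^b$ and the same $\delta_{t+1}$ and differ only through the mortality table used. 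Subtracting the first from the second cancels $\pi_t^b$; then substituting $\delta_{t+1}=D_{t+1}^b+v_{t+1}^b$ on the right-hand side and collecting the terms in $q_{x+t}$ and $q_{x+t}^*$ produces exactly the displayed identity, and combining it with the two expectation formulas above yields $\e{_1\tilde{y}_{t+1}^{LG}}=-\e{_1\tilde{y}_{t+1}^{MCV q-q*}}$.

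The main obstacle is not the algebra after the subtraction, which is routine, but the correct set-up of the two recursions. One has to be sure that, within the demographic sub-profit, $\tilde{be}_t^{Rf(t),q}$ --- hence $epv_t^{j^*,q}$ --- obeys a recursion with the \emph{same} net-premium term $\pi_t^b$ as $v_t^b$, i.e.\ that the expense cash flows have already been stripped off into the expense component of the decomposition (Appendix A); and one has to fix the within-year timing of surrenders, premium and benefit consistently in both recursions. This is also where the hypothesis $t>1$ is used: at inception the first-order reserve starts from zero by the equivalence principle whereas $epv_0^{j^*,q}$ does not, so the cancellation acquires an extra term measuring the first- versus second-order table mismatch --- in agreement with the remark after (\ref{EVtgreater1}) that for $t=0$ the expected demographic profit also depends on that mismatch.
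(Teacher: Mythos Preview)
Your argument is correct and lands on the same key ingredient as the paper --- the Fackler-type recursion for $epv_t^{j^*,q}$ (the paper's equation (\ref{equazione_30})) --- but the packaging differs. The paper first adds $_1\tilde{y}_{t+1}^{LG}$ (in its extended form $[v_t^b+\pi](\tilde{w}_t-\tilde{s}_{t+1})(1+j^*)-(\tilde{x}_{t+1}+\tilde{w}_{t+1}v_{t+1}^b)$) to $_1\tilde{y}_{t+1}^{MCV q-q^*}$, so that all $v_t^b$ and $v_{t+1}^b$ terms cancel \emph{before} taking expectations; what remains is $\e{\tilde{w}_t-\tilde{s}_{t+1}}\bigl[(\pi+epv_t^{j^*,q})(1+j^*)-epv_{t+1}^{j^*,q}p_{x+t}-q_{x+t}\bigr]$, and the paper then proves the single recursion $(\pi+epv_t^{j^*,q})(1+j^*)=epv_{t+1}^{j^*,q}p_{x+t}+q_{x+t}$ from scratch by expanding the definition of $epv_t^{j^*,q}$ for an endowment. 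You instead keep the two expectations separate, write \emph{two} Fackler recursions (one for $v_t^b$ on $q^*$, one for $epv_t^{j^*,q}$ on $q$) and subtract them; the $v^b$ terms then cancel at this later stage. Your route is arguably cleaner because it invokes the recursions as standard rather than re-deriving them, and it treats a generic death-benefit rate $\delta_{t+1}$ in one shot, whereas the paper specializes to $\delta_{t+1}=1$ and declares the other cases ``easily extendable''. One small remark: your closing explanation of the hypothesis $t>1$ is not quite right --- both recursions, and hence your subtraction identity, hold verbatim at $t=0$ as well; the initial-year peculiarity the paper alludes to after formula (\ref{EVtgreater1}) concerns the full expected demographic profit $\e{_1\tilde{y}_{t+1}^{MCV}}$ (via the $Rf$--$j^*$ component), not the balance between the LG and $q$--$q^*$ pieces that Theorem \ref{teorema_1} is about.
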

\begin{proof}
\noindent The starting point is represented by the sum of formulas (\ref{localgaapeq}) and (\ref{3rdComponent}). However, we specify that, instead of using the compact notation of formula (\ref{localgaapeq}), we consider the extended definition of demographic profit:
\begin{equation*}
	\begin{aligned}
		_1\tilde{y}_{t+1}^{LG}&=[v_t^b+\pi]\cdot(\tilde{w}_t-\tilde{s}_{t+1})\cdot(1+j^*)-(\tilde{x}_{t+1}+\tilde{w}_{t+1}\cdot v_{t+1}^b)\\
		&=\ D_{t+1}^b \cdot[q_{x+t}^*\cdot(\tilde{w}_t-\tilde{s}_{t+1})-\tilde{z}_{t+1}]
	\end{aligned}
	\label{eq:proof_ev_0}
\end{equation*}
where $\pi$ is the constant pure premium rate obtained as $\pi=\pi_{t+1}=b_{t+1} (1-\alpha^*-\beta^* )-\gamma^*$.
The proof regarding the equality between the two equations can be found in \cite{savelli2013risk}. Hence, we have:
\begin{equation*}
	\begin{aligned}
		_1\tilde{y}_{t+1}^{LG}+_1\tilde{y}_{t+1}^{MCV q-q^*}=&\ v_t^b\cdot(\tilde{w}_t -\tilde{s}_{t+1})\cdot(1+j^*)+\pi\cdot (\tilde{w}_t -\tilde{s}_{t+1})\cdot(1+j^*) \\
		-v_{t+1}^b \cdot \tilde{w}_{t+1}&-\tilde{x}_{t+1}+epv_{t}^{j^*, q}\cdot(\tilde{w}_t -\tilde{s}_{t+1})\cdot(1+j^*) \\
		-v_t^b\cdot (\tilde{w}_t -&\tilde{s}_{t+1})\cdot(1+j^*)-(epv_{t+1}^{j^*, q}-v_{t+1}^b)\cdot(\tilde{w}_t -\tilde{s}_{t+1}-\tilde{z}_{t+1})
	\end{aligned}
	\label{eq:proof_ev_1}
\end{equation*}

\noindent By formula (\ref{eq:recursivesums}),  it is possible to rewrite previous relation as follows:
\begin{equation*}
	\begin{aligned}
		_1\tilde{y}_{t+1}^{LG}+_1\tilde{y}_{t+1}^{MCV q-q^*}=&\left(\pi+epv_{t}^{j^*, q}\right)\cdot (\tilde{w}_t -\tilde{s}_{t+1})\cdot(1+j^*) \\ &-\tilde{x}_{t+1}-epv_{t+1}^{j^*, q}\cdot(\tilde{w}_t -\tilde{s}_{t+1}-\tilde{z}_{t+1})\\
		=(\tilde{w}_t -&\tilde{s}_{t+1})\cdot\left[\left(\pi+epv_{t}^{j^*, q}\right)\cdot(1+j^*)-epv_{t+1}^{j^*, q}\right]\\
		+ epv_{t+1}^{j^*, q}&\cdot\tilde{z}_{t+1}-\tilde{x}_{t+1}
	\end{aligned}
\end{equation*}

\noindent We compute now the expected value and we consider the case of an endowment policy. The result is easily extendable to pure endowment and term insurance policies, since these contracts can be seen as specific cases of an endowment contract.
\begin{equation*}
	\begin{aligned}
		\e{_1\tilde{y}_{t+1}^{LG}+_1\tilde{y}_{t+1}^{MCV q-q^*}}=&\\
		\e{\tilde{w}_t-\tilde{s}_t}\left[\left(\pi+epv_{t}^{j^*, q}\right)\cdot(1+j^*)-epv_{t+1}^{j^*, q}\right]
		&+epv_{t+1}^{j^*,q}\cdot\e{\tilde{z}_{t+1}}-\e{\tilde{x}_{t+1}}
	\end{aligned}
\end{equation*}
Since $\tilde{x}_{t+1}=\tilde{z}_{t+1}$  for an endowment contract and $\e{\tilde{z}_{t+1}}=q_{x+t}\cdot\e{\tilde{w}_t-\tilde{s}_t}$, we have:
\begin{equation*}
	\begin{aligned}
		\e{_1\tilde{y}_{t+1}^{LG}+_1\tilde{y}_{t+1}^{MCV q-q^*}}=& \\
		\e{\tilde{w}_t-\tilde{s}_t}\Biggl(\left(\pi+epv_t^{j^*,q}\right)\cdot(1+j^*)&
		-epv_{t+1}^{j^*,q}\cdot p_{x+t}-q_{x+t}\Biggr)
	\end{aligned}
\end{equation*}

Focusing on the term in the bracket, we prove that the following recursive equation holds:
\begin{equation}
	\left(\pi+epv_t^{j^*,q}\right)\cdot(1+j^*)=epv_{t+1}^{j^*,q}\cdot p_{x+t}+q_{x+t}
	\label{equazione_30}
\end{equation}
In the case of a endowment contract with unitary sum insured, we have:
\begin{equation*}
	\begin{aligned}
		epv_t^{j^*,q}=&_{n-t}p_{x+t}\cdot(1+j^*)^{-(n-t)}+\sum_{h=0}^{n-t-1} {_{h/1}{q_{x+t}}\cdot(1+j^*)^{-(h+1)}}+\\
		&-\pi \sum_{h=0}^{n-t-1}{_{h}{p_{x+t}}\cdot(1+j^*)^{-h}} = \\
		&_ {n-t}p_{x+t}\cdot(1+j^*)^{-(n-t)}+\\
		&+(q_{x+t}\cdot(1+j^*)^{-1}+\sum_{h=1}^{n-t-1}{_{h/1}q_{x+t}\cdot (1+j^*)^{-(h+1)}})+\\
		&-\pi\cdot(1+\sum_{h=1}^{n-t-1}{_hp_{x+t}\cdot(1+j^*)^{-h}})
	\end{aligned}
\end{equation*}
That, for $s=h-1$, could be rewritten as:
\begin{equation*}
	\begin{aligned}
		epv_t^{j^*,q}=&_{n-t}p_{x+t}\cdot(1+j^*)^{-(n-t)}+\\
		&+(q_{x+t}\cdot(1+j^*)^{-1}+\sum_{s=0}^{n-t-2}{_{(s+1)/1}q_{x+t}\cdot(1+j^*)^{-(s+2)}})+\\
		&-\pi\cdot(1+\sum_{s=0}^{n-t-2}{_sp_{x+t}\cdot(1+j^*)^{-(s+1)}})
	\end{aligned}
\end{equation*}
Considering now:
\begin{equation*}
	\begin{aligned}
		\dfrac{epv_t^{j^*,q}}{_1E_{x+t}}=&_{n-t-1}p_{x+t+1}\cdot(1+j^*)^{-(n-t-1)}+\\
		&\left(\dfrac{q_{x+t}}{_1p_{x+t}}+\sum_{s=0}^{n-t-2}{\dfrac{_{(s+1)/1}q_{x+t}}{_1p_{x+t}}\cdot(1+j^*)^{-(s+1)}}\right)+\\
		&-\dfrac{\pi}{_1E_{x+t}}-\pi\cdot\sum_{s=0}^{n-t-2}{_sp_{x+t+1}\cdot(1+j^*)^{-s}}
	\end{aligned}
\end{equation*}
we have:
\begin{equation*}
	\dfrac{epv_t^{j^*,q}}{_1E_{x+t}}=epv_{t+1}^{j^*,q}-\dfrac{\pi}{_1E_{x+t}}+\dfrac{q_{x+t}}{_1p_{x+t}}
\end{equation*}
and, with simple algebra, easily follows equation (\ref*{equazione_30}):
\end{proof}
\newpage

\noindent Having proved formula (\ref{EV1}), the final purpose of this Section is to consider the demographic risk expressed in formula (\ref{eq:demographicprofit}) and prove the following theorem:
\begin{theorem}
	Considering a without-profit endowment insurance contract that pays a lump sum equal to 1 either in case of death or in case of survival at the end of the contract and without benefits in case of lapses, if second-order technical bases at time $t$ and time $t+1$ are the same, the following recursive equation holds:
	\begin{equation}
		(be_t+\pi)(1+i_{t}(0,0,1))=_{/1}q_{x+t}+be_{t+1}\cdot_1p_{x+t}
	\end{equation}
where $i_{t}(0,1)$ is the spot rate in force at time $t$ and  $be_t$ and $be_{t+1}$ are the pure best estimate rates computed using realistic demographic and financial assumptions in force at time t and neglecting expenses and expenses loadings.
	\label{teorema_2}
\end{theorem}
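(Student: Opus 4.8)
The plan is to follow the same route as the proof of Theorem~\ref{teorema_1}, replacing the flat first-order discount factor $(1+j^*)^{-1}$ by the one-period discount factor read off the risk-free term structure in force at time~$t$. The only genuinely new ingredient is financial rather than demographic, namely the internal consistency of spot and forward discount factors along a fixed curve; everything else is the bookkeeping already carried out for equation~(\ref{equazione_30}).

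First I would fix notation. Let $v(t,k)$ be the price at time~$t$ of a unit zero-coupon bond maturing $k$ years later, implied by the risk-free curve available at time~$t$, so that in particular $v(t,1)=(1+i_t(0,1))^{-1}$ and $v(t,0)=1$. Since, by hypothesis, $be_{t+1}$ is evaluated with the financial assumptions \emph{in force at time~$t$}, the discount factors entering $be_{t+1}$ are the forward factors $\hat v(t+1,s)=v(t,s+1)/v(t,1)$, so that $v(t,k)=v(t,1)\cdot\hat v(t+1,k-1)$ for every $k\ge 1$. For a without-profit endowment with unitary sum insured, no lapse benefit and neglecting expenses and expense loadings, the pure best estimate at time~$t$ is
\begin{equation*}
be_t={}_{n-t}p_{x+t}\,v(t,n-t)+\sum_{h=0}^{n-t-1}{}_{h/1}q_{x+t}\,v(t,h+1)-\pi\sum_{h=0}^{n-t-1}{}_{h}p_{x+t}\,v(t,h),
\end{equation*}
and $be_{t+1}$ is the analogous expression with residual term $n-t-1$, attained age $x+t+1$ and discount factors $\hat v(t+1,\cdot)$.

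Next I would isolate the $h=0$ contributions in $be_t$: the first-year death payment $q_{x+t}\,v(t,1)$ and the premium $-\pi$ cashed in at the start. In the remaining sums I would peel off one year of survival via the identities ${}_{h/1}q_{x+t}={}_{1}p_{x+t}\cdot{}_{(h-1)/1}q_{x+t+1}$ and ${}_{h}p_{x+t}={}_{1}p_{x+t}\cdot{}_{h-1}p_{x+t+1}$ (legitimate because, by assumption, the second-order life table does not change between $t$ and $t+1$), together with $v(t,h+1)=v(t,1)\,\hat v(t+1,h)$ and $v(t,h)=v(t,1)\,\hat v(t+1,h-1)$. Re-indexing by $s=h-1$ exactly as in the proof of equation~(\ref{equazione_30}), the bracketed remainder is precisely ${}_{1}p_{x+t}\,v(t,1)\,be_{t+1}$, so that
\begin{equation*}
be_t={}_{1}p_{x+t}\,v(t,1)\,be_{t+1}+q_{x+t}\,v(t,1)-\pi .
\end{equation*}
Rearranging and multiplying through by $1+i_t(0,1)=v(t,1)^{-1}$ gives $(be_t+\pi)(1+i_t(0,1))={}_{/1}q_{x+t}+be_{t+1}\cdot{}_{1}p_{x+t}$, which is the asserted recursion (here ${}_{/1}q_{x+t}=q_{x+t}$). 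As in Theorem~\ref{teorema_1}, the pure endowment and the term insurance cases then follow by specialisation, setting the death benefit or the survival benefit to zero.

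The routine part is the demographic bookkeeping — the probability identities and the index shift $s=h-1$ — which is copied almost verbatim from the derivation of equation~(\ref{equazione_30}). The step that needs care, and which I would state as an explicit remark, is the financial consistency $v(t,k)=v(t,1)\,\hat v(t+1,k-1)$: it is exactly the requirement that $be_{t+1}$ be computed under the curve in force at time~$t$ that makes the forward factors compose correctly and lets the time-$(t+1)$ best estimate factor out cleanly. Were one instead to use the realised curve $Rf(t+1)$, this clean one-period recursion would no longer hold, and the correct one-year behaviour of the best estimate would be captured only through the three-term decomposition of formula~(\ref{demprof3comp}).
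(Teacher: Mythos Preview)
Your proposal is correct and follows essentially the same route as the paper: both arguments write out $be_t$ explicitly, split off the first-year death benefit and the time-$t$ premium, peel one year of survival from the remaining probabilities, and recognise the residual as ${}_1p_{x+t}$ times the one-period discounted $be_{t+1}$ computed under the forward factors of the time-$t$ curve. Your zero-coupon notation $v(t,k)$ and the explicit forward-consistency remark $v(t,k)=v(t,1)\,\hat v(t+1,k-1)$ are a clean repackaging of the paper's product $\prod_{h}(1+i_t(0,h,h+1))^{-1}$, but the underlying manipulation --- including the index shift $s=h-1$ --- is identical.
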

\begin{proof}
We recall here the definition of the best estimate rate of an endowment policy computed using realistic demographic assumption $q$ and the risk-free rate curve $it_{}$ in force at time $t$. 
\begin{equation*}
	\begin{aligned}
		be_t=&_{n-t}p_{x+t}\cdot\Biggl[\prod_{h=0}^{n-t-1}{(1+i_t(0,h,h+1))}\Bigg]^{-1}+\\
		&\sum_{k=0}^{n-t-1}{_{k/1}q_{x+t}\Biggl[\prod_{h=0}^{k}{(1+i_t(0,h,h+1))}\Biggr]^{-1}}-\pi\cdot \ddot{a}_{(x+t):(n-t)}
	\end{aligned}
\end{equation*}
This is also equal to:
\begin{equation}
	\begin{aligned}
		be_t=&_{n-t}p_{x+t}\Biggl[\prod_{h=0}^{n-t-1}{(1+i_t(0,h,h+1))}\Biggr]^{-1}+_{/1}q_{x+t}\cdot(1+i_t(0,0,1))^{-1}+\\
		&\sum_{k=1}^{n-t-1}{_{k/1}q_{x+t}\Biggl[\prod_{h=0}^{k}{(1+i_t(0,h,h+1))}\Biggr]^{-1}}-\pi\cdot\sum_{h=1}^{n-t-1}{_hE_{x+t}}-\pi
	\end{aligned}
	\label{eq:trepunto22}
\end{equation}
From formula (\ref{eq:trepunto22}), we have that the following relation holds:
\begin{equation}
	\begin{aligned}
		(be_t+\pi)\cdot(1+i_t(0,0,1))&-_{/1}q_{x+t}=_{n-t}p_{x+t}\Biggl[\prod_{h=1}^{n-t-1}{(1+i_t(0,h,h+1))}\Biggr]^{-1}\\
		&+\sum_{s=0}^{n-t-2}{\Biggl( { }_{(s+1)/1} q_{x+t}\Biggl[\prod_{h=1}^{s+1}{(1+i_t(0,h,h+1))}\Biggr]^{-1}\Biggr)}\\
		&-\pi\cdot\sum_{h=1}^{n-t-1}{\Biggl({ }_h p_{x+t}\Biggl[\prod_{j=1}^{h}{(1+i_t(0,j,j+1))}\Biggr]^{-1}\Biggr)}
	\end{aligned}
	\label{eq:trepunto33}
\end{equation}
Since, the estimation of the best estimate at time $t+1$ under the assumption in force at time $t$ is equal to:
\begin{equation}
	\begin{aligned}
		be_{t+1}=&_{n-t-1}p_{x+t+1}\Biggl[\prod_{h=1}^{n-t-1}{(1+i_t(0,h,h+1))}\Biggl]^{-1}+\\
		&+\sum_{k=0}^{n-t-2}{\Biggl( { }_{k/1}q_{x+t+1}\cdot\Biggl[\prod_{h=1}^{k+1}{(1+i_t(0,h,h+1))}\Biggr]^{-1}\Biggr)}+\\
		&-\pi\cdot\ddot{a}_{(x+t+1):(n-t-1)}
	\end{aligned}
	\label{eq:trepunto44}
\end{equation}
it is noticeable that the right-hand side of formula (\ref{eq:trepunto33}) is equal to $be_{t+1}\cdot _1p_{x+t}$. Hence, we have:
\begin{equation}
	(be_t+\pi)(1+i_t(0,0,1))=_{/1}q_{x+t}+be_{t+1}\cdot_1p_{x+t}
	\label{finale_ricorsiva}
\end{equation}

\noindent It is worth pointing out that the proof can be easily adapted to the cases of single premiums, pure endowment or term insurance contracts and flat rates, that have been also analysed in the paper. All of these combinations can be considered as special cases of the one that has been proved.
\end{proof}

\added{\noindent The most interesting aspect of this recursive formula, concerns the fact that formula (\ref{eq:demographicprofit}) for $t>1$, has an expected value different from $0$ when the technical rate $j^*$ differs from the spot rate $i_t(0,1)$. This difference between formula (\ref{eq:demographicprofit}) and formula (\ref*{finale_ricorsiva}) will be relevant for the interpretation of the results reported in Sections 5 and 6.}

\section{The profit formation} 
\noindent We focus in this Section on the expected demographic profit considering two \added{non-participating} life insurance contracts: a pure endowment and a term insurance. To this end, we consider a cohort of policyholders, whose main characteristics are summarized in Table \ref{tab:tabella1}. We start from a simplified application that allows to provide additional insights. In particular, for the sake of simplicity, we assume flat risk free-rates constant over time and we neglect the effect of expenses. Additionally, we are assuming that the risk-free rates \added{is equal to} the technical rate $j^*$ guaranteed to the policyholder. We have instead that the insurance company priced contracts assuming first-order death probabilities $q^*$ equal to $85\%$ of the death probabilities given by the ISTAT2016 \added{population} life table.  
In this first analysis, we assume that the observed mortality follows rates given by the ISTAT2016 \added{population} life table. In other words, a prudential pricing has been applied by the insurance company and hence, a demographic profit is expected. In this regard, we compare in Figure \ref{fig:F1} how the profit is released over time in either a market-consistent or a local accounting framework. 
	\begin{table}[htb]
		\centering
		\caption{Model parameters}
		\begin{tabular}{|l|r|}
			\hline
			Individual age at policy issue & 40             \\ \hline
			Gender                         & Male           \\ \hline
			Policy duration                & 20             \\ \hline
			Expenses loadings              & 0\%            \\ \hline
			Risk-free rates                & 1\%            \\ \hline
			Number of policyholders        & 15,000         \\ \hline
			Initial \added{sums insured} & 1,510,653,999        \\ \hline
			CV\footnote{\added{CV stands for Coefficient of Variation}} of initial \added{sums insured}                   & 1.99              \\ \hline
			I order demographic basis       & 0.85*ISTAT2016 \\ \hline
			II order demographic basis      & ISTAT2016      \\ \hline
			Technical rate         & 1\%            \\ \hline
		\end{tabular}
	\label{tab:tabella1}
	\end{table}
In a local accounting framework (i.e. see $_1\tilde{y}_{t+1}^{LG}$, formula (\ref{localgaapeq})), the expected profit varies over time, depending on the trend of the sum-at-risk, the implicit safety loading and the effects of mortality.
As expected, in a market consistent context, because of unlocked technical bases, the expected profit (i.e. $_1\tilde{y}_{t+1}^{MCV}$) occurs when the difference between the technical bases and realistic assumptions is revealed, while only the unexpected profit linked to the \added{idiosyncratic} risk of the mortality rates occurs over time. In particular, at the inception of the contract, we notice the effect of the difference between first-order basis (used for premium assessment) and second-order basis (used for the market-consistent valuation of technical provisions). In other words, implicit safety loadings are released as a technical profit at the end of the first year.\\
From year 2 to the end of the contract, we observe that $\e{_1\tilde{y}_{t+1}^{MCV}}$ is equal to 0. We are indeed assuming that risk-free rates are constant and equal to first-order technical rate $j^*$ and that the realistic assumptions used for best estimate valuation are kept unchanged over time by the company: \added{these results derive from Theorem (\ref{teorema_1}) and Theorem (\ref*{teorema_2}) of Section 4.}\\
\begin{figure}[ht!]
	\centering
	\includegraphics[width=110mm]{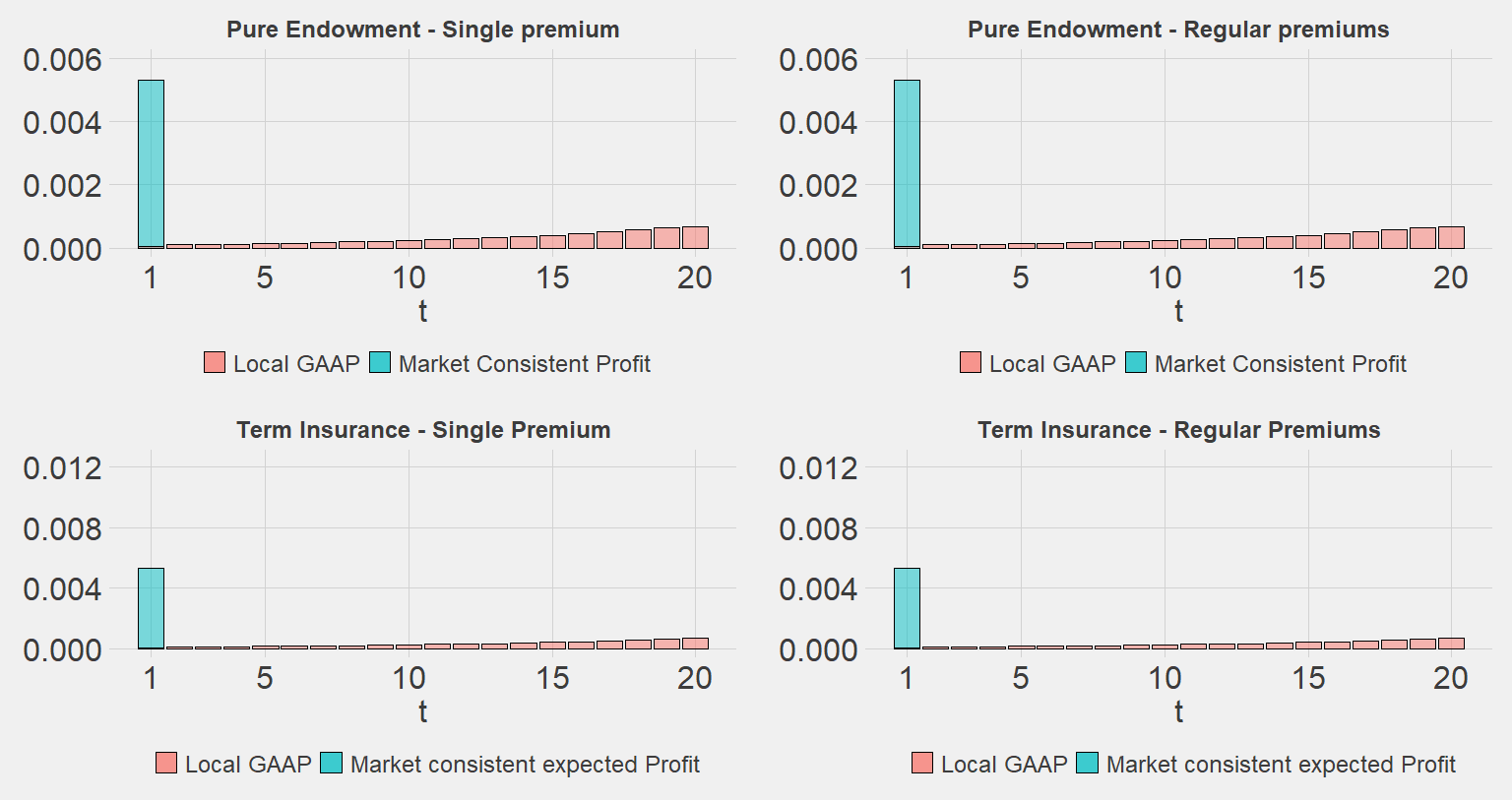}
	\caption{Pattern of expected demographic profit for a Pure Endowment and a Term Insurance \label{overflow}}
	\label{fig:F1}
\end{figure}
\noindent Same analysis has been also applied to term insurance contracts. In this case, we maintain the same assumptions reported in Table 1, but we assume that premiums have been computed using a ISTAT2014 life table, while insureds die at the same rates used for the pure endowment portfolio; therefore, also in this case we have an implicit safety loading.\\
It is worth pointing out the higher expected demographic profit with respect to pure endowment portfolio and a different behaviour at time 1 between regular and single premium (see Figure 1, top). A higher profit is here realised when a single premium is charged because of greater values of the implicit safety loading\footnote{$\lambda_t$ is calculated as $\lambda_t=\dfrac{(q_t^*-q_t)}{q_t}$  and while in the Pure Endowment $\lambda_t$ was set equal to 15\%, using ISTAT2014 entails that when $t=0$, $\lambda_0=24.64\%$, then it increases until $t=19$ where $\lambda_{19}=32.34\%$}\\
Previous examples assumed a fully coincidence between technical rate and risk-free rates. The advantage is the fact that demographic profit is not affected by the behaviour of financial rates. We analyse now the effects of alternative risk-free rates and we focus only on the expected profit evaluated in a market-consistent framework.\\
Figure 2, left hand-side, reports the pattern of the expected profit for a pure endowment. In this case, same assumptions of Table 1 have been considered, but constant spot risk-free rates equal to $2\%$ are assumed. Higher risk-free rates lead to an increase of the initial profit due to a lower technical provision at the end of the year because of higher discounting effects\footnote{$\e{_1\tilde{y}_{1}^{MCV}}=\pi\cdot w_0 \cdot (1+j^*)-\e{\tilde{be}_{1}^{Rf(1),q}}\cdot w_0\cdot _1p_{40}$ This formulation is easily obtainable computing the expected value on formula (3), when t = 0}. A lower increase is observed in case of regular premiums because also future cash-in (premiums) are discounted at higher rates.
\begin{figure}[ht!]
	\centering
	\includegraphics[width=80mm]{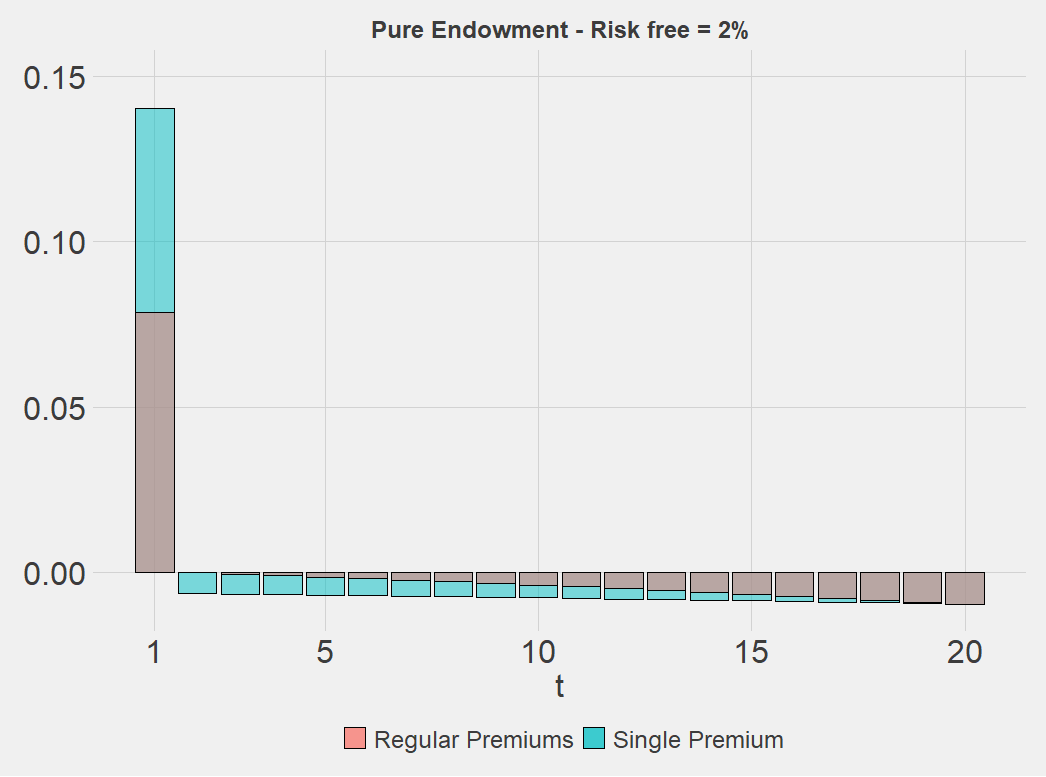}
	\caption{Expected demographic profit in the Pure Endowment \label{overflow}}
	\label{F2}
\end{figure}
\pagebreak
Following periods, namely for $t>1$, are instead characterized by expected losses. This behaviour can be explained by formula (\ref{2ndComponent}). As shown in Figure \ref*{3tassi}, the reserve jump $(\tilde{be}_t^{Rf(t),q}-epv_t^{j^*,q})$ is negative and is \deleted{capitalized}\added{accumulated} at a rate $j^*$. This amount is higher than the term $(\tilde{be}_{t+1}^{Rf(t+1),q}-epv_{t+1}^{j^*,q})$. The same result can be also explained by formula (\ref{EVtgreater1}), where, under the assumption that the best estimates at time $t$ and $t+1$ are positive, we have an expected loss if $j^*$ is smaller than the spot rate $i(t,t+1)$. 
\begin{figure}[ht!]
	\centering
	\includegraphics[width=80mm]{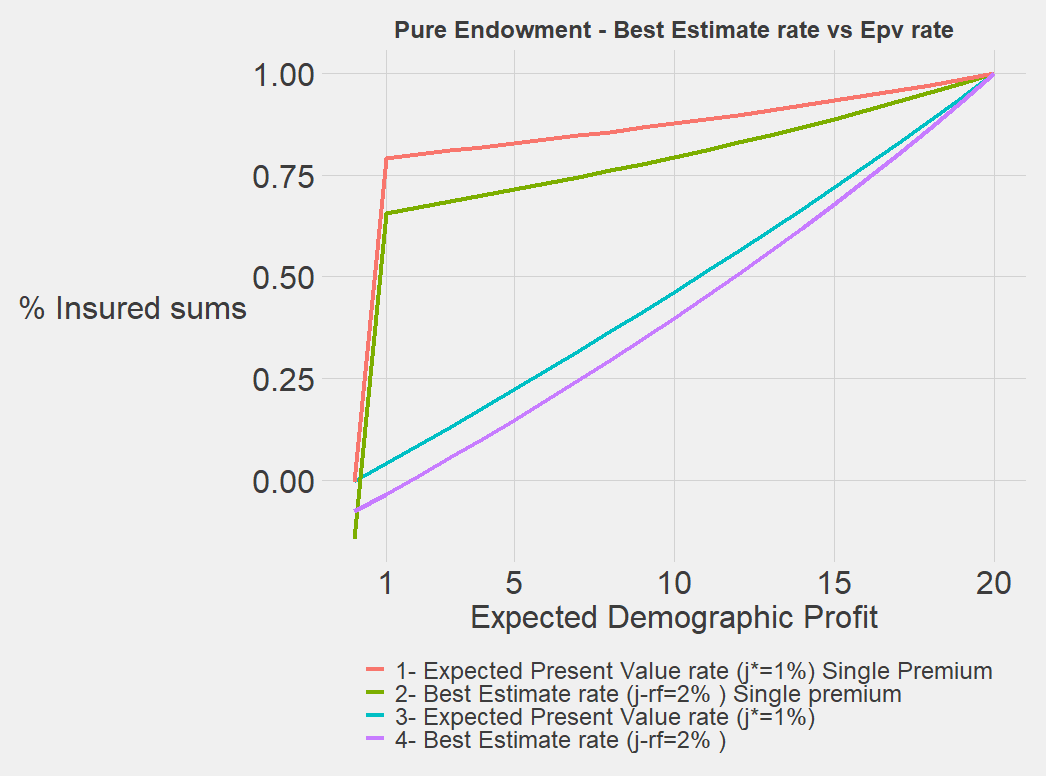}
	\caption{Expected present value and best estimate rates \label{overflow}}
	\label{3tassi}
\end{figure}
\\It is also interesting to note that in case of a negative best estimate we have an opposite result. In Figure \ref*{tassorfal20}, we consider again a pure endowment with the same characteristics but we assume \deleted{an extreme} \added{an extraordinarily large} flat rate equal to $20\%$. It could be noticed that, in line with formula (\ref{EVtgreater1}), in time periods in which $\tilde{be}_{t}^{Rf(t),q}$ is negative, a positive expected profit is observed (see $0\leq t\leq 11$ in Figure \ref*{tassorfal20}).
\begin{figure}[ht!]
	\centering
	\includegraphics[width=95mm]{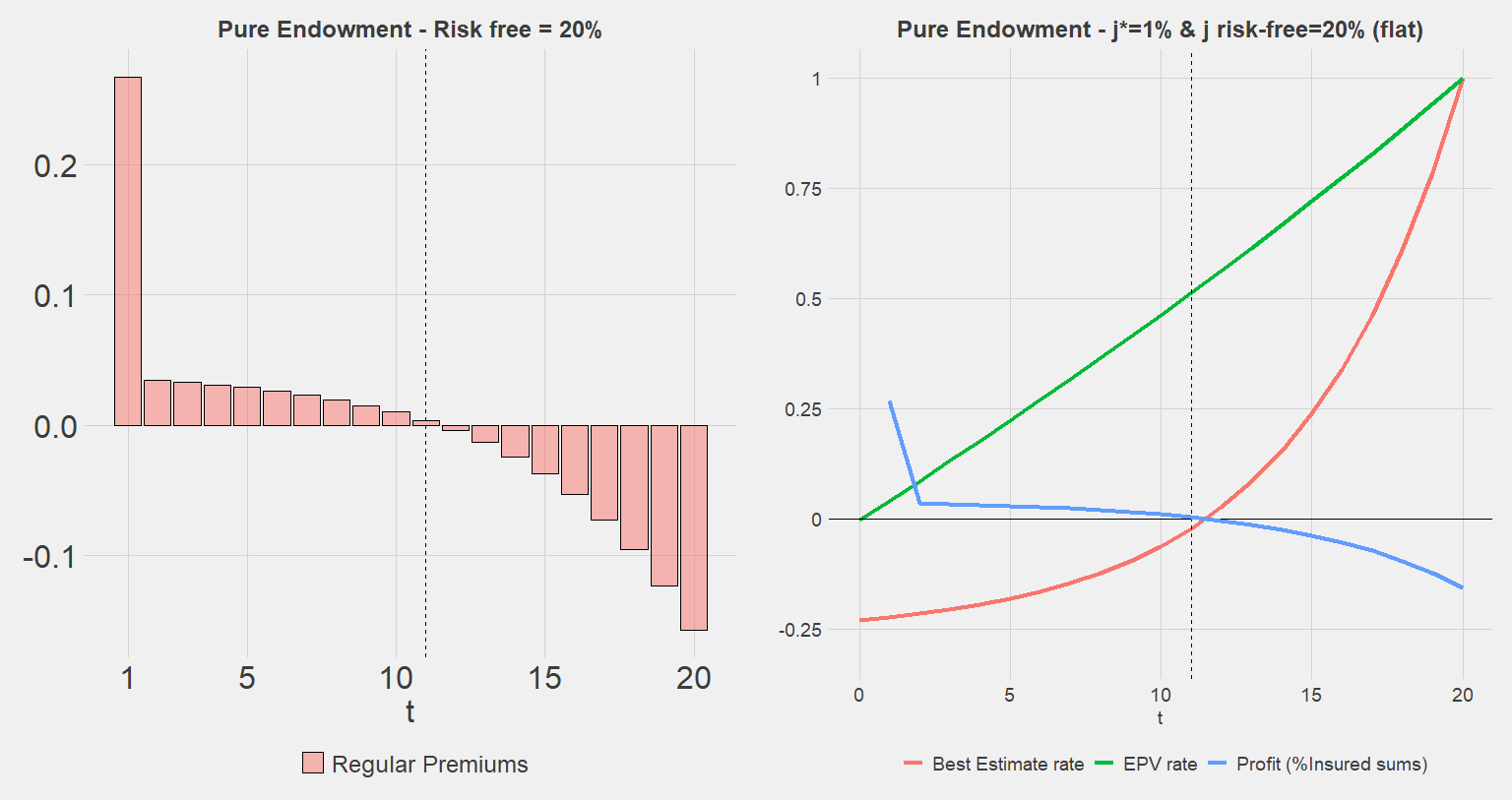}
	\caption{Expected demographic profit in the Term Insurance \label{overflow}}
	\label{tassorfal20}
\end{figure}
\\In Figure \ref*{5tcm} we report the behaviour of the expected profit in case of either a regular and a single premium for the term insurance. Policyholder and contract characteristics are the same reported in Table 1 and the risk-free rate is equal to $2\%$. Results confirm a consistent behaviour independent of the kind of policy considered.\\
However, because of the lower best estimate rate, the effect of financial rates is less relevant when a term insurance contract is considered.
\begin{figure}[ht!]
	\centering
	\includegraphics[width=110mm]{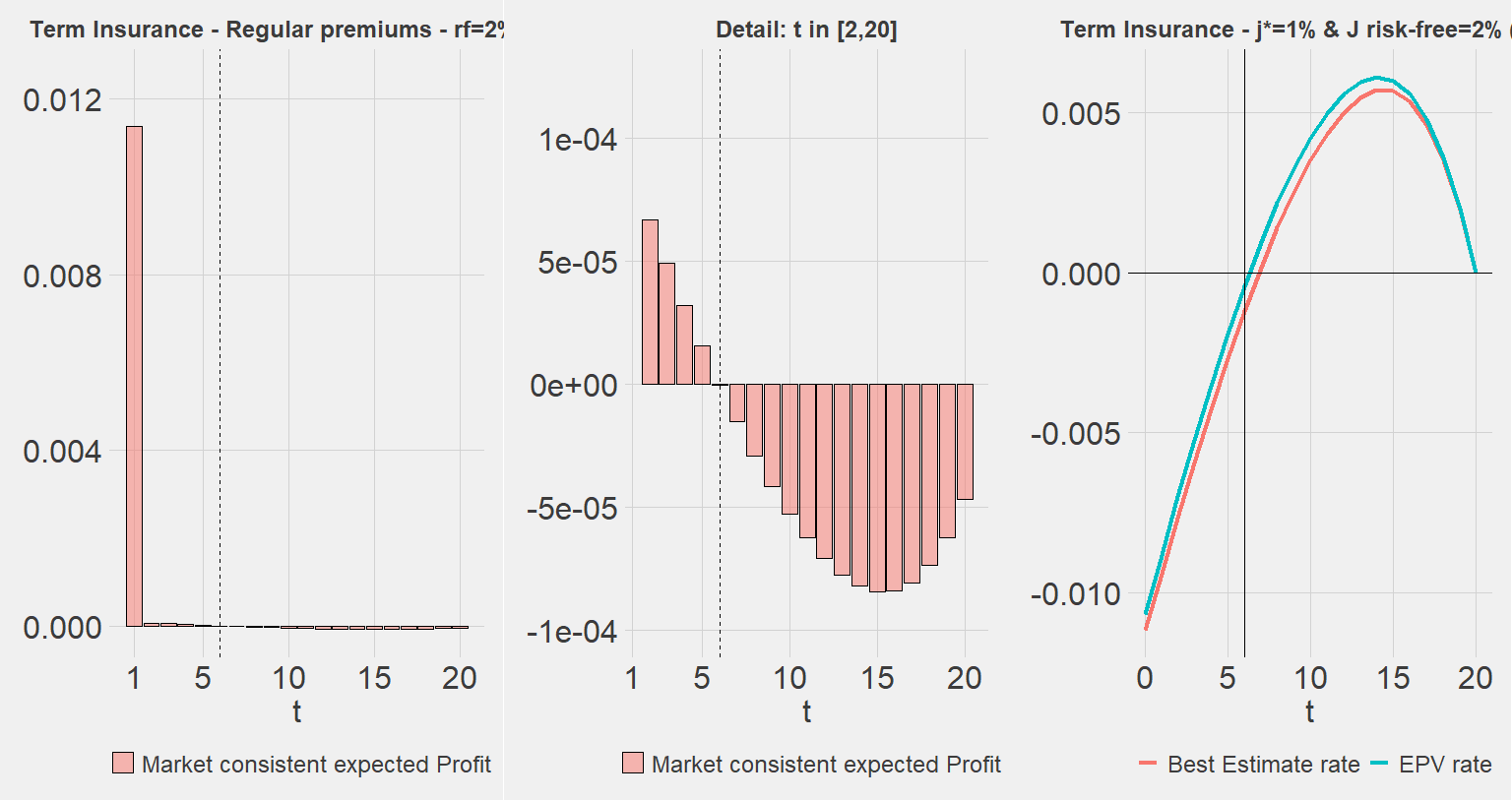}
	\caption{Expected demographic profit in the TI with high risk-free rates \label{overflow}}
	\label{5tcm}
\end{figure}
\\Finally, in this section, the analysis has been developed assuming a constant risk-free rate for all maturities. However, considering the case of a risk-free curve, similar comments follow from formula (\ref{2ndComponent}) and \ref{EVtgreater1} based on the comparison between the technical rate $j^*$ and the forward rates for each period $(t,t+1)$.\\Before presenting next Section, it is emphasized that for a particularly young insurance company, the strictly one year vision can lead to incorrect and partial results, indeed if the Solvency Capital Requirement were calculated over a three-year time horizon, the results would be opposite.

\section{The application of the model to non-participating life policies} 
\noindent In this section, we analyse the behaviour of our model in terms of volatility and capital requirement by developing a detailed case study. We consider a portfolio characterized by three \added{non-participating} insurance policies: a Pure Endowment, an Endowment and a Term Insurance. To this end, we are able to catch how the different characteristics of the contracts can affect the demographic profit distribution and the related capital requirement.\\
In Table 2, we summarize the general characteristics (age and contractual duration of the policy) and the expense loadings. For the sake of comparability, we assume that different policies have been underwritten by policyholders with the same characteristics. Furthermore, the \added{sums insured} at time 0, $w_0$, are calculated as the sum of the individual \added{sums insured} of each policyholder:
\begin{equation}
w_0=\sum_{i=1}^{l_0}C_i = l_0 \cdot \bar{C_0}
\label{capitaliassicurati}
\end{equation}
where $l_0$ is the number of policyholders in $t=0$.
\begin{table}[htb]
	\label{tab:t2}
	\centering
	\caption{Model parameters}
	\begin{tabular}{|l|r|}
		\hline
		Individual age at policy issue & 40             \\ \hline
		Policy duration                & 20           \\ \hline
		Premium type                & Annual premiums (20)             \\ \hline
		Acquisition loading              & 50\%            \\ \hline
		Collection loading                & 2.5\%            \\ \hline
		Management loading                & 0.15\%            \\ \hline
		Number of policyholders $(l_0)$       & 15,000         \\ \hline
		Expected value of the single insured sum        & 100,000         \\ \hline
		CV of the \added{sums insured}         & 1.99\%            \\ \hline
	\end{tabular}
\end{table}
The risk-free rate curve used is the one provided by EIOPA for the Euro area (at the end of 2015); we preferred to use a curve calibrated in a quiet period, which does not present particular types of stress. We have chosen the one without volatility adjustment. Demographic technical bases are summarized in Table 3 and are different for each policy because the goal is to highlight specific profit creation considering a realistic rating process. \\
Consequently, for the Endowment and the Term Insurance with positive sum-at-risk, the pricing is carried out with the ISTAT2014 demographic table. Pure Endowment, characterized by a negative sum-at-risk, has been priced multiplied the death probabilities, derived by ISTAT2016 life tables, by a coefficient $\alpha_t$ between $0.8$ and $0.9$ that depends on the age of the policyholder. Effective mortality of all portfolios is instead described by the ISTAT2016 table. \added{We are obviously aware that second-order mortality can be affected by self-selection or medical-selection (e.g. in term insurance products to identify calibrated risks). Similarly, term insurance and endowment policies are usually priced using a different life table. To assure a greater comparison between results we considered in this numerical analysis similar technical bases for different policies. The results can be easily adapted to the case of a higher customization of life tables.} \\
It is \added{also} noteworthy that future mortality rates can be obviously obtained using forecasting models (as, for instance, Lee-Carter model). However, main comments described in this section also hold in case of projected second-order life tables.
\begin{table}[]
	\centering
	\caption{Demgraphic assumptions }
	\label{tab:my-table}
	\begin{tabular}{|l|c|c|c|}
		\hline
		& \multicolumn{1}{l|}{Pure Endowment} & \multicolumn{1}{l|}{Endowment} & \multicolumn{1}{l|}{Term Insurance} \\ \hline
		First order - q* & $\alpha_t\cdot$ISTAT2016                          & ISTAT2014                      & ISTAT2014                           \\ \hline
		Second order - q & ISTAT2016                           & ISTAT2016                      & ISTAT2016                           \\ \hline
	\end{tabular}
\end{table}
The model is applied by means of Monte Carlo simulations. In particular, the number of deaths is simulated in a one-year time horizon by a Binomial distribution with parameters equal to the number of policyholders and the second-order death probability. To consider the variability of the \added{sums insured}, we extract the insured capital at the end of the year and the amounts paid in case of death by LogNormal distributions with mean and CV defined in Table 2.\\
Another key element is the assessment of the Best Estimate rate at the end of the period. It is assumed that, on average, the spot rates at the end of the year will coincide with the forward rates inferable from the risk-free curve of spot rates at the beginning of the year. The volatility of risk-free rates is introduced through the use of a Vašíček model (see \cite{vasicek1977equilibrium}); particularly effective in case of negative risk-free rates for shorter maturities. Hence, the calibration of the Vašíček model is carried out for each policy and at each time point, requiring that the expected value of the best estimate rate calculated at the end of the time span coincides with the best estimate rate calculated with the forward rates implicit in the spot rate curve at the beginning of the time horizon.\\
The results of the stochastic model are presented. Two points must be highlighted:
\begin{itemize}
	\item For each policyholder, 10 million simulations have been made. Therefore, the results are particularly consistent, especially in terms of volatility\footnote{With an Intel i7 8700K processor (working in parallel - 6 Cores, 12 Threads) it requires about 18 minutes}.
	\item The total amount of the \added{sums insured} at the inception of the policy ($t =0$) is equal to approximately 1.5 billion euros. It is therefore noted that any Capital Requirement, in terms of magnitude, must be compared with the value just mentioned, although at first glance it may seem particularly high
\end{itemize}
First of all, Table 4 shows\footnote{In all tables, T stands for Theoretical values, i.e. the exact characteristics of the random variables \added{computed using closed formulas}.} the results of the simulation model applied at the inception of the contract ($t =0$) and with a one-year view to the portfolio of Pure Endowment policies.
\begin{table}[H]
	\caption{Simulated MCV results - Pure Endowment}
	\centering
	\label{tab:my-table}
	\begin{tabular}{|l|r|r|r|}
		\hline
		Pure Endowment & t=0          & t=10       & t=19        \\ \hline
		$\e{_1\tilde{y}_{t+1}^{MCV}}(T)$& 150,339,562  & -8,882,965 & -19,893,883\\ \hline
		$\e{_1\tilde{y}_{t+1}^{MCV}}$& 150,349,827  & -8,760,096 & -19,894,122\\ \hline
		$\e{_1\tilde{y}_{t+1}^{MCV}}$ on $w_t$& 9.95\%      &-0.61\%            & -1.38\%            \\ \hline
		$\sigma(_1\tilde{y}_{t+1}^{MCV})$& 4,345,500    & 5,937,483  & 1,821,044           \\ \hline
		$\gamma(_1\tilde{y}_{t+1}^{MCV})$& -0.04        & -0.02      & 0.97           \\ \hline
		SCR& -138,942,272 & 24,208,313 & 23,333,458  \\ \hline
		SCR on $w_t$& -9.20\%      & 1.62\%     & 1.60\%      \\ \hline
	\end{tabular}
\end{table}
\begin{table}[H]
	\caption{Local GAAP results - Pure Endowment}
	\centering
	\label{tab:my-table}
	\begin{tabular}{|l|r|r|r|}
		\hline
		Pure Endowment & t=0          & t=10       & t=19        \\ \hline
		$\e{_1\tilde{y}_{t+1}^{LG}}(T)$& 2,528  & 230,554 & 1,040,705\\ \hline
		$\sigma(_1\tilde{y}_{t+1}^{LG})(T)$& 16,455    & 615,730  & 1,799,850           \\ \hline
		$\gamma(_1\tilde{y}_{t+1}^{LG})(T)$& 2.41        & 1.54      & 1.09           \\ \hline
		SCR & 19,463 & 792,091 & 2,397,872  \\ \hline
		SCR on $w_t$ & 0.01\% & 0.05\% & 0.16\%  \\ \hline
	\end{tabular}
\end{table}

\noindent It should be noted that the initial Best Estimate rate at the time $0^+$ is negative  (equal to -7.65\%). Despite the significant contribution of the implicit safety demographic loading, the largest share of expected profit derives from the difference between the first order financial rate ($j^*=1\%$) and the discounting spot rate for longer maturities. For instance, the expected survival benefits paid to the policyholder at the end of the coverage are discounted by using a spot rate equal to 1.57\%\footnote{Results are strongly influenced by the risk-free curve used. For instance, using the EIOPA curve for August 2020 we obtain an expected profit in $t=19$ in Table 4, because of a spot rate equal to 0.69\%.}.\\
Additionally, we have that the capital requirement, computed here using a value at risk at a 99.5\% confidence level, is negative. We have indeed that the huge expected profit allows to cover adverse fluctuation of demographic assumption also in the worst case \added{computed at the previously mentioned confidence level}.
A different situation is instead obtained assuming to be at time t=10 and t=19, respectively. In Table 4 we report also main characteristics of profit distribution as well as the capital requirement computed on a one-year view at different time periods.\\
It is interesting to note that after the first year of contract, where the expected profit \added{at inception} is accounted for, the risk-free forward rate higher than the technical rate leads to significant expected losses (in accordance to formula (\ref{EVtgreater1}). This effect increases as the year progressively grows because forward rates grow over time. Because of this behaviour we have positive requirements in both periods. \\
In Table 5, we provide results obtained applying the model in a local accounting framework. It is noteworthy that we observe that the standard deviation in $t=10$ increases due to the greater volatility of risk-free rates, while in $t=19$ only the strictly demographic volatility remains, because the only risk-free rate is known. As regards skewness, what happens is similar: in $t=0$ and $t=10$ the skewness deriving from the Vašíček model is the main driver, while in $t=19$ only the skewness of the purely demographic component remains. With reference to the simulated SCR, it should be noted that the previous Solvency 0 and Solvency I regulations indicated 0.3\% of positive sum-at-risk as a capital requirement for life underwriting risk without differentiations related to the characteristics of the insurance portfolio.\\
A similar analysis has been developed for an Endowment; we report in Table 6 main characteristics of the distribution of $_1\tilde{y}_{t+1}^{MCV}$ and the SCR ratio according to the three time periods; Table 7 summarizes analogous values computed in a local accounting framework.\\
\begin{table}[H]
	\caption{Simulated MCV results - Endowment}
	\centering
	\label{tab:my-table}
	\begin{tabular}{|l|r|r|r|}
		\hline
		Endowment & t=0          & t=10       & t=19        \\ \hline
		$\e{_1\tilde{y}_{t+1}^{MCV}}(T)$& 153,901,015  & -9,096,636 & -20,084,746 \\ \hline
		$\e{_1\tilde{y}_{t+1}^{MCV}}$& 153,905,829  & -9,138,027 & -20,084,746 \\ \hline
		$\e{_1\tilde{y}_{t+1}^{MCV}}$ on $w_t$& 10.19\%      &-0.61\%            & -1.38\%            \\ \hline
		$\sigma(_1\tilde{y}_{t+1}^{MCV})$& 4,588,717    & 6,020,325  & 0           \\ \hline
		$\gamma(_1\tilde{y}_{t+1}^{MCV})$& -0.05        & -0.02      & 0           \\ \hline
		SCR& -141,802,756 & 24,821,250 & 20,084,746  \\ \hline
		SCR on $w_t$& -9.39\%      & 1.66\%     & 1.38\%      \\ \hline
	\end{tabular}
\end{table}
\begin{table}[H]
	\caption{Local GAAP results - Endowment}
	\centering
	\label{tab:my-table}
	\begin{tabular}{|l|r|r|r|}
		\hline
		Endowment & t=0          & t=10       & t=19        \\ \hline
		$\e{_1\tilde{y}_{t+1}^{LG}}(T)$& 260,745  & 395,352 & 0\\ \hline
		$\sigma(_1\tilde{y}_{t+1}^{LG})(T)$& 750,933    & 595,986  & 0           \\ \hline
		$\gamma(_1\tilde{y}_{t+1}^{LG})(T)$& -2.41        & -1.54      & 0           \\ \hline
		SCR & 3,289,428 & 2,091,067 & 0  \\ \hline
		SCR on $w_t$ & 0.22\% & 0.14\% & 0\%  \\ \hline
	\end{tabular}
\end{table}
\noindent Endowment contract shows similar results to the pure endowment case in terms of both expected profit and capital requirement ratio. Main differences can be noticed in the last year of contract ($t=19$). We have indeed that, given the \added{fact that the payment of benefit is sure}, we have no volatility and hence, the capital requirement is only needed to face expected losses.\\
As well-known previous contracts are typically chosen for saving purposes and the financial profit is the key issue for an insurance company. Therefore, we investigate the behaviour of the demographic profit in case of a term insurance, that is typically chosen by policyholders for risk-protection purposes.
Main results are reported in Table 8.\\
\begin{table}[H]
	\caption{Simulated MCV results - Term Insurance}
	\centering
	\label{tab:my-table}
	\begin{tabular}{|l|r|r|r|}
		\hline
		Term Insurance & t=0          & t=10       & t=19        \\ \hline
		$\e{_1\tilde{y}_{t+1}^{MCV}}(T)$& 20,898,387  & -271,353 & -269,009 \\ \hline
		$\e{_1\tilde{y}_{t+1}^{MCV}}$& 20,891,192  & -269,927 & -269,285 \\ \hline
		$\e{_1\tilde{y}_{t+1}^{MCV}}$ on $w_t$& 1.38\%      &-0.02\%            & -0.02\%            \\ \hline
		$\sigma(_1\tilde{y}_{t+1}^{MCV})$& 777,012    & 1,224,059  & 1,820,823           \\ \hline
		$\gamma(_1\tilde{y}_{t+1}^{MCV})$& -2.42        & -1.48      & -0.97           \\ \hline
		SCR& -17,219,206 & 24,821,250 & 6,769,046  \\ \hline
		SCR on $w_t$& -1.14\%      & 0.36\%     & 0.47\%      \\ \hline
	\end{tabular}
\end{table}
\begin{table}[H]
	\caption{Local GAAP results - Term Insurance}
	\centering
	\label{tab:my-table}
	\begin{tabular}{|l|r|r|r|}
		\hline
		Term Insurance & t=0          & t=10       & t=19        \\ \hline
		$\e{_1\tilde{y}_{t+1}^{LG}}(T)$& 266,582  & 800,159 & 2,160,710\\ \hline
		$\sigma(_1\tilde{y}_{t+1}^{LG})(T)$& 767,745    & 1,206,226  & 1,799,850           \\ \hline
		$\gamma(_1\tilde{y}_{t+1}^{LG})(T)$& -2.41        & -1.54      & -1.09           \\ \hline
		SCR & 3,335,518 & 4,265,074 & 4,320,054  \\ \hline
		SCR on $w_t$ & 0.22\% & 0.29\% & 0.29\%  \\ \hline
	\end{tabular}
\end{table}
\noindent It is interesting to note that the expected gain accounted for in $t=1$, although strictly greater than the expected losses of the subsequent periods, is lower than other policies ones (\added{e.g. pure endowment, endowment}). In this case, the very small volume of mathematical reserves ensures that expected losses and expected profits are very low. \added{The previous comment is also explained by the fact that we are assuming the same first and second-order life tables in term insurance and endowment.} On the other hand, despite the volatility of sums insured paid in case of death, a lower ratio between the capital requirement and the sums insured is observed.\\
Comparing the results of the Term \added{Insurance} with those obtained in a Local GAAP context (see Table 9), we observe that also in this case the volatility in $t=0$ and $t=10$ is mainly driven by the volatility of the Vašíček model, while in $t=19$ it remains only the volatility of the strictly demographic component. As in the case of the Pure Endowment, it is clearly observed that where the financial component is zero (in $t=19$ the only spot rate is known), only the purely demographic skewness remains.\\
Finally, it should be noted that, despite the different nature of the alternative policies, a similar effect of the implicit forward rates is noticed both in terms of expected gains / losses, and in terms of capital requirement
\section{Conclusions} 
\noindent In this paper, we focus on the evaluation of the capital requirements for both mortality and longevity risk.  In particular, we adapt classical actuarial relations to the market consistent framework required by Solvency II directive. We provide a specific model able to catch the characteristics of demographic rik for \added{non-participating} life insurance contracts. As well-known, in a local accounting context, differences between expected and observed mortality rates is the key topic for assessing demographic risk. We show that in a market-consistent framework, the financial component cannot be completely separated from the purely demographic one.  We prove indeed that in case second-order demographic assumptions are stable over time, the connection between the financial guaranteed rate and the risk-free rate curve becomes the key element for the assessment for one-year demographic risk. Hence, to have a complete view of the insurance position, main results must be then compared with the financial profit and the capital requirement for market risk too. Therefore, further research should regard an integrated assessment of both demographic and financial risks that could be helpful for defining strategies and future management actions regarding both portfolio characteristics and asset allocation.
\added{Finally, further developments will exploit the proposed model and closed formulas to quantify both idiosyncratic and systematic volatility to quantify the Solvency Capital Requirement.}
\newpage

\bibliographystyle{acm}
\bibliography{BibMCV}

\begin{thebibliography}{10}

\bibitem{bacinello1986portfolio}
{\sc Bacinello, A.~R.}
\newblock Portfolio valuation in life insurance.
\newblock In {\em Insurance and Risk Theory}. Springer, 1986, pp.~385--399.

\bibitem{barigou2019fair}
{\sc Barigou, K., Chen, Z., and Dhaene, J.}
\newblock Fair dynamic valuation of insurance liabilities: Merging actuarial
  judgement with market-and time-consistency.
\newblock {\em Insurance: Mathematics and Economics 88\/} (2019), 19--29.

\bibitem{bauer2015least}
{\sc Bauer, D., and Ha, H.}
\newblock A least-squares monte carlo approach to the calculation of capital
  requirements.
\newblock In {\em World Risk and Insurance Economics Congress, Munich, Germany,
  August\/} (2015), pp.~2--6.

\bibitem{boonen2017solvency}
{\sc Boonen, T.~J.}
\newblock Solvency ii solvency capital requirement for life insurance companies
  based on expected shortfall.
\newblock {\em European actuarial journal 7}, 2 (2017), 405--434.

\bibitem{dahl2004stochastic}
{\sc Dahl, M.}
\newblock Stochastic mortality in life insurance: market reserves and
  mortality-linked insurance contracts.
\newblock {\em Insurance: mathematics and economics 35}, 1 (2004), 113--136.

\bibitem{dhaene2017fair}
{\sc Dhaene, J., Stassen, B., Barigou, K., Linders, D., and Chen, Z.}
\newblock Fair valuation of insurance liabilities: merging actuarial judgement
  and market-consistency.
\newblock {\em Insurance: Mathematics and Economics 76\/} (2017), 14--27.

\bibitem{delacts}
{\sc {European Parliament and Council}}.
\newblock {\em Commission Delegated Regulation (EU) 2015/35}.
\newblock 2014.

\bibitem{hari2008longevity}
{\sc Hari, N., De~Waegenaere, A., Melenberg, B., and Nijman, T.~E.}
\newblock Longevity risk in portfolios of pension annuities.
\newblock {\em Insurance: Mathematics and Economics 42}, 2 (2008), 505--519.

\bibitem{olivieri2005valutazione}
{\sc Olivieri, A., and Pitacco, E.}
\newblock {\em La valutazione in assicurazione vita. Profili attuariali}.
\newblock Egea, 2005.

\bibitem{olivieri2008assessing}
{\sc Olivieri, A., and Pitacco, E.}
\newblock Assessing the cost of capital for longevity risk.
\newblock {\em Insurance: Mathematics and Economics 42}, 3 (2008), 1013--1021.

\bibitem{pelsser2014time}
{\sc Pelsser, A., and Stadje, M.}
\newblock Time-consistent and market-consistent evaluations.
\newblock {\em Mathematical Finance: An International Journal of Mathematics,
  Statistics and Financial Economics 24}, 1 (2014), 25--65.

\bibitem{savelli1993modello}
{\sc Savelli, N.}
\newblock {\em Un modello di teoria del rischio per la valutazione della
  solvibilit{\`a} di una Compagnia di assicurazioni sulla vita}.
\newblock LINT, 1993.

\bibitem{savelli2013risk}
{\sc Savelli, N., and Clemente, G.~P.}
\newblock A risk-theory model to assess the capital requirement for mortality
  and longevity risk.
\newblock {\em Journal of Interdisciplinary Mathematics 16}, 6 (2013),
  397--429.

\bibitem{shen2018lifetime}
{\sc Shen, Y., and Sherris, M.}
\newblock Lifetime asset allocation with idiosyncratic and systematic mortality
  risks.
\newblock {\em Scandinavian Actuarial Journal 2018}, 4 (2018), 294--327.

\bibitem{stevens2010calculating}
{\sc Stevens, R., De~Waegenaere, A., and Melenberg, B.}
\newblock Calculating capital requirements for longevity risk in life insurance
  products: Using an internal model in line with solvency ii.
\newblock Tech. rep., Working Paper, Tilburg University, 2010.

\bibitem{vasicek1977equilibrium}
{\sc Vasicek, O.}
\newblock An equilibrium characterization of the term structure.
\newblock {\em Journal of financial economics 5}, 2 (1977), 177--188.

\end{thebibliography}

\newpage

\appendix
\section{Homans' revised decomposition}
\label{sec:app1}
\noindent We show here the decomposition of formula (\ref{eq:tpmcv}) in five components. 
	For the sake of brevity, we limit this Appendix to the definition of the five components we found. The proof follows by simple algebra.
\begin{equation}
	\tilde{Y}_{t+1}^{MCV}=_1\tilde{y}_{t+1}^{MCV}+_2\tilde{y}_{t+1}^{MCV}+_3\tilde{y}_{t+1}^{MCV}+_4\tilde{y}_{t+1}^{MCV}+_5\tilde{y}_{t+1}^{MCV}
	\label{eq:5components}
\end{equation}
It is noteworthy that we focus exclusively on without-profit policies. The reason for this choice lies in the fact that the model aims at isolating the demographic component. \\
We give now the definition of the five components expressed in rate notation.
The first component represents the \added{demographic} profit and it's defined as in equation \ref{eq:demographicprofit}: 
\begin{equation}
\begin{aligned}
_1\tilde{y}_{t+1}^{MCV} = \ &[\tilde{be}_t^{Rf(t), q(t)}+b_{t+1}(1-\alpha^*-\beta^*)-\gamma^*] (\tilde{w}_t-\tilde{s}_{t+1})(1+j^*)+\\
&-(\tilde{x}_{t+1}+\tilde{be}_{t+1}^{Rf(t+1), q(t+1)})
\end{aligned}
\label{eq:demographicprofit2}
\end{equation}
While the second profit component as defined in equation (\ref{eq:financialprofit}) could be seen as:
\begin{equation}
	\begin{aligned}
		_2\tilde{y}_{t+1} =&(\tilde{j}_{t+1}-j^*)\cdot(\tilde{be}_t^{Rf(t), q(t)}\cdot\tilde{w}_t + b_{t+1}(1-\alpha^*-\beta^*)(\tilde{w}_t-\tilde{s}_{t+1})+\\
		& -(\gamma^*\tilde{w}_t)-(g^*_t\cdot\tilde{be}_t^{Rf(t),q(t)}\cdot\tilde{s}_{t+1})
	\end{aligned}
	\label{eq:financialprofit2}
\end{equation}
The lapse profit, is defined as:
\begin{equation}
_3\tilde{y}_{t+1}^{MCV} = \ (\tilde{be}_t^{Rf(t), q(t)} -\gamma^*-g^*_t\cdot\tilde{be}_t^{Rf(t), q(t)})\cdot(1+j^*)\cdot\tilde{s}_{t+1}
\label{eq:lapseprofit}
\end{equation}
Where $g^*_t$  is a penalization coefficient that considers a surrender penalty:
\begin{equation}
g^*_t = 
\begin{cases}
0&\ if\ t<\tau  \\  (1+j_s)^{-(m-t)}&\ if\ t>=\tau 
\end{cases}
\label{eq:zillmercoeff}
\end{equation}
where $\tau>0$ and $j_s^*<j^*$  are fixed by the undertaking. \\
The fourth component of profit is the expense one and it is defined in the following way:
\begin{equation}
_4\tilde{y}_{t+1} = (1+j^*)[(\Delta\alpha^*_{t+1}+\Delta\beta^*_{t+1})\cdot b_{t+1}\cdot(\tilde{w}_t-\tilde{s}_{t+1})+\Delta\gamma^*_{t+1}\cdot\tilde{w}_t]
\label{eq:expenseprofit}
\end{equation}
where $\Delta\alpha_{t+1}^*$, $\Delta\beta_{t+1}^*$ and $\Delta\gamma_{t+1}^*$ depend on the differences between the first order expense assumptions and the realistic ones.\\
The last component, is the residual profit:
\begin{equation}
\begin{aligned}
_5\tilde{y}_{t+1} =\  &(\tilde{j}_{t+1}-j^*)[(\Delta\alpha^*_{t+1}+\Delta\beta^*_{t+1})\cdot b_{t+1}\cdot(\tilde{w}_t-\tilde{s}_{t+1})+\\
&+\Delta\gamma^*_{t+1}\cdot\tilde{w}_t]
\end{aligned}
\label{eq:5profit}
\end{equation}
\label{homans}

	

\end{document}